\newtheorem{theorem}{Theorem}[section]
\newtheorem{lemma}[theorem]{Lemma}
\theoremstyle{definition}
\newtheorem{definition}[theorem]{Definition}
\theoremstyle{remark}
\newtheorem{remark}[theorem]{Remark}
\newcommand{\RR}{{\mathbb R}}
\newcommand{\cX}{\mathcal{X}}
\newcommand{\tmu}{{\tilde{\mu}}}
\newcommand{\tp}{{\tilde{p}}}
\newcommand{\tw}{{\tilde{w}}}
\title{Fast Primal-Dual Update against Local Weight Update\\in Linear Assignment Problem and Its Application}
\author{
  Kohei Morita\thanks{Indeed Japan, Tokyo, Japan. Email: \texttt{k.morita06@gmail.com}} \and
  Shinya Shiroshita\thanks{Preferred Networks, Tokyo, Japan. Email: \texttt{shiroshita@preferred.jp}} \and
  Yutaro Yamaguchi\thanks{Osaka University, Osaka, Japan. Email: \texttt{yutaro.yamaguchi@ist.osaka-u.ac.jp}} \and
  Yu Yokoi\thanks{Tokyo Institute of Technology, Tokyo, Japan. Email: \texttt{yokoi@c.titech.ac.jp}}}
\date{\empty}
\begin{document}
\maketitle
\thispagestyle{empty}

\begin{abstract}
We consider a dynamic situation in the weighted bipartite matching problem: edge weights in the input graph are repeatedly updated and we are asked to maintain an optimal matching at any moment.
A trivial approach is to compute an optimal matching from scratch each time an update occurs.
In this paper, we show that if each update occurs locally around a single vertex, then a single execution of Dijkstra's algorithm is sufficient to preserve optimality with the aid of a dual solution.
As an application of our result, we provide a faster implementation of the envy-cycle procedure for finding an envy-free allocation of indivisible items.
Our algorithm runs in $\mathrm{O}(mn^2)$ time, while the known bound of the original one is $\mathrm{O}(mn^3)$, where $n$ and $m$ denote the numbers of agents and items, respectively.
\end{abstract}

\paragraph{Keywords}
Assignment problem, Bipartite matching, Envy-free allocation of indivisible items (EF1/EFX).

\clearpage
\setcounter{page}{1}


\section{Introduction}
The weighted matching problem in bipartite graphs is a fundamental combinatorial optimization problem.
Initiated by the so-called Hungarian method~\cite{Kuhn1955}, a variety of efficient algorithms for the problem have been developed \cite{BGK1977, EK1972, HR1980, Iri1960, Tomizawa1971}.

We consider a dynamic situation in this problem: we are given an edge-weighted bipartite graph in which edge weights are repeatedly updated, and asked to maintain an optimal matching at any moment. 
While many papers on dynamic matching consider fast approximation for the situation in which edges arbitrarily appear and disappear (one by one), we focus on maintaining an exact optimal solution when the underlying graph is fixed and edges change their weights (possibly simultaneously).\footnote{In a similar context, Sankowski~\cite{Sankowski2007} proposed pseudopolynomial-time exact maintenance of the maximum weight of a matching in a dynamic bipartite graph. For the literature of usual dynamic graph algorithms including those for matchings, see, e.g., \cite{BDL2021,HHS2021,Henzinger2018,SW2017}.}

A trivial approach is to compute an optimal matching from scratch each time the weights are updated.
This simple method is applicable to any update; using a typical efficient algorithm, it requires $\mathrm{O}(n(m + n \log n))$ time per update, where $n$ and $m$ denote the numbers of vertices and edges, respectively.

In this paper, we show that if each update occurs around a single vertex, a more efficient algorithm can be designed.
Specifically, in such a situation, by maintaining a dual optimal solution simultaneously, it suffices to perform Dijkstra's algorithm only once for updating both primal and dual solutions.\footnote{The readers familiar with the Hungarian method may think that the idea is obvious and moreover it is a folklore. We, however, could not find any description of this fact, and we believe that it is worth reporting with a clearly described proof.}
This requires $\mathrm{O}(m + n \log n)$ time per update.

We first describe our idea in the setting of the assignment problem, i.e., a special case in which the input bipartite graph is balanced and we are asked to find an optimal perfect matching.
Suppose that the edge weights are updated around a vertex $s$.
On the primal side, we can obtain a new optimal matching from an old one by exchanging the edges along a minimum-weight cycle intersecting $s$ in the auxiliary graph.
Such a cycle can be found by computing shortest paths from $s$ in the modified auxiliary graph obtained by flipping the matching edge incident to $s$.
On the dual side, from the shortest path length, we can obtain a new potential with respect to the new optimal matching.
Since the modified auxiliary graph has no negative cycles, these primal/dual updates can be done in $\mathrm{O}(nm)$ time by the Bellman--Ford algorithm.
Our main observation is that when the weights are updated around a single vertex, the old potential can be easily adjusted to the modified auxiliary graph; this enables us to apply Dijkstra's algorithm, which requires $\mathrm{O}(m + n \log n)$ time.

We then discuss cases in which the input bipartite graph is unbalanced or the perfectness of a matching is not imposed.
Although such cases can be reduced to the assignment problem by adding dummy vertices and edges, a na\"ive reduction may essentially increase the input size.
We show that by avoiding an explicit reduction, the same computational time bound can be obtained.

We also demonstrate that our technique is useful for obtaining a faster implementation of the envy-cycle procedure for finding an \emph{EF1 (envy-free up to one item)} allocation of indivisible items \cite{Lipton2004}.
We observe that the envy-cycle procedure can be modified so that the part of cancelling envy cycles is formulated as the assignment problem, whose input graph is updated around a single vertex in each iteration.
Our primal-dual update algorithm fits this situation.
While the well-known running time bound of the original envy-cycle procedure is $\mathrm{O}(mn^3)$, our variant runs in $\mathrm{O}(mn^2)$ time, where $n$ and $m$ denote the numbers of agents and items, respectively.

The rest of the paper is organized as follows.
In Section~\ref{sec:preliminaries}, we provide the necessary definitions and basic facts on the assignment problem.
In Section~\ref{sec:PDU}, we describe our algorithm for updating primal/dual solutions in response to weight update around a vertex in the assignment problem, and discuss equivalent formulations of the weighted bipartite matching problem.
In Section~\ref{sec:improved}, we present an improved variant of the envy-cycle procedure.

\section{Preliminaries}\label{sec:preliminaries}
In the \emph{(linear) assignment problem}, we are given an edge-weighted bipartite graph and asked to find a perfect matching of minimum total weight.
We briefly review the theory behind typical primal-dual approaches for this problem (see \cite{Schrijver2003} for more details).
Throughout the paper, except for Section~\ref{sec:unbalanced}, we assume that the input graph is simple and has a perfect matching.

The input is a pair $(G, w)$ of a bipartite graph $G = (V^+, V^-; E)$ with vertex set $V$ partitioned into $V^+$ and $V^-$ and edge set $E$, where $|V^+|=|V^-|=n$ and $|E|=m$, and edge weights $w \colon E \to \mathbb{R}$.\footnote{It is often assumed that the input graph is a complete bipartite graph. The present formulation reduces to this setting by adding all the absent edges with sufficiently large weight. This reduction, however, does not preserve the input size if the original graph is not dense, i.e., $m = \mathrm{o}(n^2)$. In this sense, we strictly evaluate the computational complexity.}
We also regard $G$ as a directed graph in which each edge is directed from $V^+$ to $V^-$.

A \emph{matching} $\mu\subseteq E$ in $G$ is an edge set in which every vertex appears as an endpoint at most once, and $\mu$ is called \emph{perfect} if all the vertices appear exactly once.
We also regard a perfect matching $\mu$ as a bijection from $V^+$ to $V^-$ and vice versa.
That is, we write $\mu(u) = v$ and $\mu^{-1}(v) = u$ if $e = (u, v) \in \mu$.
We define the weight of a matching $\mu$ as $w(\mu) = \sum_{e \in \mu}w(e)$.

Let $\mu$ be a matching in $G = (V, E)=(V^+, V^-; E)$.
The \emph{auxiliary graph} for $\mu$, denoted by $G_{\mu} = (V, E_{\mu})$, is a directed graph that is obtained from $G$ by flipping the direction of the edges in $\mu$, i.e.,
\begin{align*}
E_{\mu} \coloneqq (E \setminus \mu) \cup \bar{\mu} = \{\, e \mid e \in E \setminus \mu \,\} \cup \{\, \bar{e} \mid e \in \mu \,\},
\end{align*}
where $\bar{e}$ denotes the reverse edge of $e$.
We also define the \emph{auxiliary weight} $w_{\mu} \colon E_{\mu} \to \mathbb{R}$ by
\[w_{\mu}(e) \coloneqq \begin{cases}
w(e) & (e \in E \setminus \mu),\\
-w(\bar{e}) & (e \in \bar{\mu}).
\end{cases}\]
We define the weight of a path or a cycle in $G_\mu$, say $P$, as $w_\mu(P) = \sum_{e \in P} w_\mu(e)$.
We call a path/cycle \emph{shortest} if it is of minimum weight (under some specified condition) and \emph{negative} if it is of negative weight.

Let $V_{\mu}^+$ and $V_{\mu}^-$ denote the sets of vertices not matched by $\mu$ in $V^+$ and in $V^-$, respectively.
An \emph{augmenting path} in $G_\mu$ is a directed path starting from a vertex in $V_{\mu}^+$ and ending at a vertex in $V_{\mu}^-$.
Let $P$ be an augmenting path or a cycle in $G_\mu$.
We then obtain another matching, denoted by $\mu \triangle P$, from $\mu$ by exchanging the edges in $E \setminus \mu$ and in $\mu$ along $P$, i.e., by adding the edges in $(E \setminus \mu) \cap P$ and by removing the edges in $\mu \cap P$.
Note that $w(\mu \triangle P) = w(\mu) + w_\mu(P)$ by definition.

A celebrated feature of a minimum-weight matching, which is utilized in many algorithms, is described below.

\begin{lemma}[cf.~{\cite[Section 17.2]{Schrijver2003}}]\label{lem:extreme}
Let $\mu$ be a matching of size $k$.
\begin{itemize}
\setlength{\itemsep}{.5mm}
\item If $\mu$ is a minimum-weight matching in $G$ subject to $|\mu| = k$, then $(G_{\mu}, w_{\mu})$ has no negative cycles; these conditions are equivalent when $k = n$.
\item If $\mu$ is a minimum-weight matching in $G$ subject to $|\mu| = k$ and $P$ is a shortest augmenting path in $(G_{\mu}, w_{\mu})$, then $\mu' \coloneqq \mu \triangle P$ is a minimum-weight matching in $G$ subject to $|\mu'| = k + 1$.
\end{itemize}
\end{lemma}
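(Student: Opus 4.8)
The plan is to treat the two bullets separately, reducing each to exchanging edges along the directed paths and cycles of a symmetric difference in the auxiliary graph and then invoking the optimality of $\mu$. For the first bullet I would prove both implications by the same exchange idea. If $\mu$ is a minimum-weight matching of size $k$ but $(G_\mu,w_\mu)$ contained a negative cycle $C$, then $\mu\triangle C$ is a matching of the same size $k$ with $w(\mu\triangle C)=w(\mu)+w_\mu(C)<w(\mu)$, contradicting minimality; here exchanging along a directed cycle of $G_\mu$ preserves the size because such a cycle alternates between edges of $E\setminus\mu$ and of $\bar\mu$. For the converse when $k=n$, take any perfect matching $\nu$: since $\mu$ and $\nu$ are both perfect, every vertex has degree $0$ or $2$ in $\mu\triangle\nu$, so $\mu\triangle\nu$, viewed inside $G_\mu$, decomposes into vertex-disjoint directed cycles, and $w(\nu)-w(\mu)=w_\mu(\mu\triangle\nu)$ is a sum of cycle weights, which is nonnegative by hypothesis.

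For the second bullet, first record that $|\mu'|=k+1$ and $w(\mu')=w(\mu)+w_\mu(P)$, so it suffices to show, for an arbitrary matching $\nu$ with $|\nu|=k+1$, that $w_\mu(P)\le w(\nu)-w(\mu)=w_\mu(\mu\triangle\nu)$. I would regard $H\coloneqq\mu\triangle\nu$ as a subgraph of $G_\mu$; since each vertex carries at most one $\mu$-edge and at most one $\nu$-edge, every vertex of $H$ has in- and out-degree at most one, so $H$ decomposes into vertex-disjoint directed paths and cycles. Classifying the path endpoints by their $\mu$- and $\nu$-status, the paths come in four kinds: forward augmenting paths (from a $\mu$-unmatched vertex of $V^+$ to a $\mu$-unmatched vertex of $V^-$), backward paths (the reversed endpoint pattern), and two kinds of size-preserving paths (both endpoints in $V^+$, or both in $V^-$). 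Counting the net contribution to $|\nu|-|\mu|=1$, the number $f$ of forward augmenting paths and the number $b$ of backward paths satisfy $f=b+1$.

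The heart of the argument is to bound $w_\mu(H)$ below by $w_\mu(P)$ using the optimality of $\mu$ for size $k$. Exchanging along any single cycle, or along any single size-preserving path, leaves a size-$k$ matching, so each such component has nonnegative $w_\mu$-weight (for cycles this is exactly the first bullet). Since $f=b+1$, I would pair each backward path $R$ with a distinct forward augmenting path $P'$; exchanging along $R\cup P'$ simultaneously is size-neutral and again yields a size-$k$ matching, whence $w_\mu(R)+w_\mu(P')\ge 0$. The one unpaired forward augmenting path has $w_\mu$-weight at least $w_\mu(P)$ because $P$ is a shortest augmenting path. Summing these nonnegative contributions together with the single term bounded below by $w_\mu(P)$ gives $w_\mu(H)\ge w_\mu(P)$, hence $w(\nu)\ge w(\mu')$, as required.

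I expect the main obstacle to be the bookkeeping in the third step: verifying that $H$ decomposes as claimed, that the endpoint classification yields exactly these four path types with $f=b+1$, and above all that each cycle, each size-preserving path, and each backward/forward pair corresponds to a legitimate size-$k$ matching, so that the minimality of $\mu$ genuinely applies. Once this combinatorial structure is pinned down, the shortest-path and no-negative-cycle hypotheses enter only at the very end to control the two extreme terms.
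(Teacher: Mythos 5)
The paper itself gives no proof of this lemma; it is quoted as a known result with a pointer to Schrijver's book (Section~17.2), so there is no internal argument to compare against. Your proof is the standard symmetric-difference argument and, as far as I can check, it is complete and correct: the directed path/cycle decomposition of $\mu \triangle \nu$ inside $G_\mu$, the endpoint classification giving $f = b + 1$, the nonnegativity of cycles and size-preserving paths via the minimality of $\mu$, the pairing of each backward path with a forward augmenting path to produce a size-$k$ matching, and the final comparison of the one leftover augmenting path against the shortest one all go through. The bookkeeping you flagged as the main risk does hold up; in particular, the endpoint analysis (e.g., that a path in $\mu \triangle \nu$ leaving a vertex of $V^+$ by a $\nu$-edge forces that vertex to be $\mu$-unmatched, since otherwise its $\mu$-edge would lie in $\mu \cap \nu$ and give two $\nu$-edges at one vertex) is exactly what makes every exchange produce a legitimate matching of the claimed size. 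This is essentially the argument the cited reference relies on, so your proposal matches the intended justification.
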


By Lemma~\ref{lem:extreme}, a minimum-weight matching $\mu$ of size $k$ can be augmented to a minimum-weight matching $\mu'$ of size $k + 1$ by finding a shortest path in a directed graph with no negative cycles.
This is done in $\mathrm{O}(nm)$ time by the Bellman--Ford algorithm.
Furthermore, this is done in $\mathrm{O}(m + n \log n)$ time by Dijkstra's algorithm (with the aid of the Fibonacci heap) if we simultaneously maintain a potential (i.e., a dual solution), which is defined as follows.

For a matching $\mu$ and a function $p \colon V \to \mathbb{R}$, we define the \emph{reduced weight} $w_{\mu, p} \colon E_{\mu} \to \mathbb{R}$ by
\[w_{\mu, p}(e) \coloneqq w_{\mu}(e) + p(u) - p(v) \quad (e = (u, v) \in E_{\mu}).\]
We say that $p$ is a \emph{potential for $(G_\mu, w_\mu)$} if $w_{\mu, p}(e)\geq 0$ for every $e\in E_\mu$.

We here summarize the Hungarian method with this speed-up.
First, we set $\mu_0 \leftarrow \emptyset$ and
\[p_0(v) \leftarrow \begin{cases}
	0 & (v \in V^+),\\
  	\min\{\, w(e) \mid e \in \delta_G(v) \,\} & (v \in V^-),
\end{cases}\]
where $\delta_G(v)$ denotes the set of edges incident to $v$ in $G$.
Then, for each $k = 0, 1, \dots, n - 1$, we successively compute a minimum-weight perfect matching $\mu_{k+1}$ of size $k + 1$ and a potential $p_{k+1}$ for $(G_{\mu_{k+1}}, w_{\mu_{k+1}})$ using $\mu_k$ and $p_k$ as follows.

In each iteration, we compute a shortest path from $V_{\mu_k}^+$ to each vertex $v \in V$ in $(G_{\mu_k}, w_{\mu_k})$ by Dijkstra's algorithm with the aid of the potential $p_k$.
On the primal side, let $P$ be a shortest augmenting path, and set $\mu_{k+1} \leftarrow \mu_k \triangle P$.
On the dual side, for each vertex $v \in V$, let $d(v)$ denote the weight of a shortest path from $V_{\mu_k}^+$ to $v$, and set $p_{k+1}(v) \leftarrow d(v)$ (where we set $p_{k+1}(v) \leftarrow p_k(v)$ if $v$ is not reachable from $V_{\mu_k}^+$).

The bottleneck is executing Dijkstra's algorithm $n$ times; hence, the computational time of the whole algorithm is bounded by $\mathrm{O}(n(m + n \log n))$.

\section{Primal/Dual Update against Weight Update around a Vertex}\label{sec:PDU}
In this section, we propose an $\mathrm{O}(m + n \log n)$-time algorithm for updating an optimal matching and a corresponding potential when the edge weights are updated around a single vertex.
We show this for the assignment problem in Section~\ref{sec:balanced}, and discuss other polynomially equivalent formulations of the weighted bipartite matching problem in Section~\ref{sec:unbalanced}.

\subsection{Assignment Problem Case}\label{sec:balanced}
The problem under consideration is described below. 
Suppose we have a pair of primal and dual optimal solutions for an instance $(G, w)$ of the assignment problem, i.e.,  we have a minimum-weight perfect matching $\mu$  and a potential $p$ for $(G_\mu, w_\mu)$.
Now, the edge weights are updated arbitrarily but only around a single vertex $s \in V$, and we are asked to find a pair of primal and dual optimal solutions for the new instance $(G, \tw)$.
By symmetry, we can assume $s \in V^+$.
In addition, for the sake of simplicity, we assume that $\delta_G(s) = \{s\} \times V^-$ by adding at most $n$ absent edges with sufficiently large weights.

Our algorithm is shown in Algorithm~\ref{alg:PDU}, which works as follows. 
Let $e'=(s,t)$ be the edge in $\mu$ incident to $s$ and let $\mu'\coloneqq \mu\setminus\{e'\}$.  
Then, in $G_{\mu'}$, all the edges in $\delta_G(s) = \{s\} \times V^-$ are directed from $s$ to $V^{-}$. 
Our objective is to find shortest paths from $s$ in $G_{\mu'}$ with respect to the new weights $\tw$. 
Note that $p$ is a potential for $(G_\mu, w_\mu)$ (i.e., $w_{\mu, p}(e)\geq 0$ for every $e\in E_\mu$), but it may not be a potential for the current graph $(G_{\mu'}, \tw_{\mu'})$.
However, since $(G_{\mu'}, \tw_{\mu'})$ differs from $(G_\mu, w_\mu)$ only around $s$, we can recover the nonnegativity condition only by modifying the value of $p$ on $s$.
Specifically, we define $p'$ as $p'(s)\coloneqq -\min\{\, \tw(e)-p(v)\mid  e=(s,v)\in \delta_{G}(s)\,\}$ and $p'(v)\coloneqq p(v)$ for every $v\in V\setminus\{s\}$. Then, $p'$ is indeed a potential for $(G_{\mu'}, \tw_{\mu'})$, and we can efficiently compute a shortest path from $s$ to each vertex by Dijkstra's algorithm.
Using the obtained paths and their weights, we update the primal and dual solutions.

\begin{algorithm}[h]
	\caption{~Primal Dual Update}
	\begin{description}
		\setlength{\itemsep}{-1mm}
		\item[Input:] A bipartite graph $G = (V^+, V^-; E)$, edge weights $w \colon E \to \mathbb{R}$, a minimum-weight perfect matching $\mu$ in $(G, w)$, a potential $p$ for $(G_\mu, w_\mu)$, a vertex $s \in V^+$ with $\delta_G(s) = \{s\} \times V^-$, and new edge weights $\tw \colon E \to \mathbb{R}$ such that $\tw(e) = w(e)$ for every edge $e \in E \setminus \delta_G(s)$.
		\item[Output:] A minimum-weight perfect matching $\tmu$ in $(G, \tw)$ and a potential $\tp$ for $(G_{\tmu}, \tw_{\tmu})$.
	\end{description}
	\begin{enumerate}
		\setlength{\itemsep}{0mm}
		\setlength{\leftskip}{-1mm}
		\item
		Let $e' = (s, t)$ be the edge in $\mu$ incident to $s$  and let $\mu' = \mu \setminus \{e'\}$.
		\item
		Set $p'(s)=-\min\{\, \tw(e)-p(v)\mid  e=(s,v)\in \delta_{G}(s)\,\}$ and $p'(v)=p(v)$  for every \mbox{$v\in V\setminus\{s\}$}. 
		Compute a shortest path $P_v$ from $s$ to each $v \in V$ in $(G_{\mu'}, \tw_{\mu'})$ by Dijkstra's algorithm with the aid of potential $p'$.
		\item
		Set $\tmu \leftarrow \mu' \triangle P_t$. 
		For each $v \in V$, set $\tp(v) \leftarrow \tilde{w}_{\mu'}(P_v)$.
		\item
		Return $\tmu$ and $\tp$.
	\end{enumerate}
	\label{alg:PDU}
\end{algorithm}

\begin{theorem}\label{thm:PDU}
Algorithm~\ref{alg:PDU} correctly finds $\tmu$ and $\tp$ in $\mathrm{O}(m + n \log n)$ time.
\end{theorem}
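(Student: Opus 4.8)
The plan is to verify the two output requirements in turn: first that the function $p'$ built in Step~2 is a genuine potential for $(G_{\mu'}, \tw_{\mu'})$ (this is what licenses the use of Dijkstra's algorithm), and then that the returned $\tmu$ and $\tp$ form a minimum-weight perfect matching together with a corresponding potential. The whole argument hinges on one structural observation, which I would isolate up front: since $\mu' = \mu \setminus \{e'\}$ leaves $s$ unmatched and $\delta_G(s) = \{s\} \times V^-$, every edge incident to $s$ in $(G_{\mu'}, \tw_{\mu'})$ is \emph{outgoing} from $s$; moreover $(G_{\mu'}, \tw_{\mu'})$ agrees with $(G_\mu, w_\mu)$ on every edge not incident to $s$, because $\tw$ and $w$ coincide off $\delta_G(s)$ and the orientation and auxiliary weight of such an edge are governed by its (unchanged) membership in the matching.

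First I would check validity of $p'$. For an edge $e = (u,v) \in E_{\mu'}$ not incident to $s$, the observation gives $\tw_{\mu'}(e) = w_\mu(e)$ with $p'(u) = p(u)$ and $p'(v) = p(v)$, so $\tw_{\mu', p'}(e) = w_{\mu, p}(e) \geq 0$ because $p$ is a potential for $(G_\mu, w_\mu)$. For an edge $e = (s,v)$ I need $\tw(e) + p'(s) - p(v) \geq 0$ for every $v$, equivalently $p'(s) \geq -(\tw(e) - p(v))$, and the prescribed value $p'(s) = -\min\{\, \tw(e) - p(v) \mid e = (s,v) \in \delta_G(s) \,\}$ is exactly the least value meeting all these constraints. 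Hence all reduced weights are nonnegative, so Dijkstra's algorithm returns, for every $v$, an honest shortest path $P_v$ from $s$ to $v$ in $(G_{\mu'}, \tw_{\mu'})$, and the true distances $d(v) := \tw_{\mu'}(P_v)$ are recovered from the reduced ones by adding back the potential difference. I would also note that every vertex is reachable from $s$: $s$ reaches all of $V^-$ directly, and each $u \in V^+ \setminus \{s\}$ through the reverse of its $\mu'$-edge, so $\tp$ will be well defined everywhere.

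Then I would establish correctness of the output. As $\mu'$ leaves exactly $s \in V^+$ and $t \in V^-$ unmatched, we have $V_{\mu'}^+ = \{s\}$ and $V_{\mu'}^- = \{t\}$, so $P_t$ is a shortest augmenting path and $\tmu = \mu' \triangle P_t$ is a perfect matching. For the dual, with $\tp(v) = d(v)$ the triangle inequality for shortest paths gives $d(v) \leq d(u) + \tw_{\mu'}(e)$ for every $e = (u,v) \in E_{\mu'}$; an edge off $P_t$ retains its orientation in $G_{\tmu}$, so $\tw_{\tmu, \tp}(e) = \tw_{\mu'}(e) + d(u) - d(v) \geq 0$, while an edge $e = (u,v)$ on $P_t$ is reversed in $G_{\tmu}$ into an edge of auxiliary weight $-\tw_{\mu'}(e)$, whose reduced weight is $-\tw_{\mu'}(e) + d(v) - d(u) = 0$ by the tightness $d(v) = d(u) + \tw_{\mu'}(e)$ along a shortest path. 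Thus $\tp$ is a potential for $(G_{\tmu}, \tw_{\tmu})$; summing reduced weights around any cycle then shows $(G_{\tmu}, \tw_{\tmu})$ has no negative cycle, and since $|\tmu| = n$, the equivalence in Lemma~\ref{lem:extreme} gives that $\tmu$ is a minimum-weight perfect matching. Notably this route never requires $\mu'$ to be optimal among matchings of size $n-1$; the potential does all the work.

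Finally, for the running time I would observe that computing $p'(s)$ scans $\delta_G(s)$ in $\mathrm{O}(n)$ time, the single Dijkstra run with a Fibonacci heap costs $\mathrm{O}(m + n \log n)$, and forming $\tmu$ from $P_t$ together with reading off $\tp$ from the computed distances costs $\mathrm{O}(m)$, for a total of $\mathrm{O}(m + n \log n)$. I expect the main obstacle to be the locality observation that underlies everything: pinning down precisely that $G_{\mu'}$ differs from $G_\mu$ only on edges at $s$ and that all such edges leave $s$, so that repairing the potential at the single coordinate $s$ restores nonnegativity. The dual-update step is, by contrast, the standard Hungarian-method argument, requiring only care with edge orientations and auxiliary-weight signs when an edge of $P_t$ is reversed.
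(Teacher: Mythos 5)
Your proof is correct, but it reaches primal optimality by a genuinely different route than the paper. You and the paper agree on two of the three ingredients: the verification that $p'$ is a potential for $(G_{\mu'}, \tw_{\mu'})$ (repairing only the coordinate at $s$, since all edges of $\delta_G(s)$ leave $s$ in $G_{\mu'}$), and the verification that $\tp$ is a potential for $(G_{\tmu}, \tw_{\tmu})$ (triangle inequality off $P_t$, tightness and sign flip on $P_t$). The difference is how optimality of $\tmu$ is established. The paper proves it directly and combinatorially: assuming a better perfect matching $\mu^*$ exists, it decomposes $\mu^* \triangle \mu$ into cycles of $(G_\mu, \tw_\mu)$, argues that the unique negative cycle $C$ must pass through $s$ (because $w$ and $\tw$ differ only around $s$ and $(G_\mu,w_\mu)$ has no negative cycles), hence contains $\bar{e}'$, and strips $\bar{e}'$ off $C$ to produce an $s$--$t$ path cheaper than $P_t$, contradicting shortestness. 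You instead reorder the logic: prove the dual statement first, observe that the existence of a potential forces every cycle of $(G_{\tmu}, \tw_{\tmu})$ to be nonnegative (the potential terms telescope), and then invoke the equivalence in Lemma~\ref{lem:extreme} at $k = n$ to conclude that $\tmu$ is a minimum-weight perfect matching. Your route is shorter and avoids the symmetric-difference case analysis entirely, at the price of leaning on the converse direction of Lemma~\ref{lem:extreme} (no negative cycles $\Rightarrow$ optimal, valid only for perfect matchings), which the paper quotes from Schrijver but never needs in that direction; the paper's argument uses only the forward direction and is in that sense more self-contained. Your explicit remark that every vertex is reachable from $s$ (so that $\tp$ is well defined everywhere) is a small point of care that the paper leaves implicit, and it is a genuine prerequisite for the dual update.
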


\begin{proof}
The algorithm clearly runs in $\mathrm{O}(m + n \log n)$ time.
We prove its correctness.

We first confirm that $p'$ in Step 2 is indeed a potential for $(G_{\mu'}, \tw_{\mu'})$.
Since $p$ is a potential for $(G_\mu, w_\mu)$, 
we have $\tw_{\mu', p'}(e) = w_{\mu, p}(e) \ge 0$ for every edge $e \in E_{\mu'} \setminus \delta_{G}(s)\subseteq E_{\mu}$.
Also, as every edge $e = (s, v) \in \delta_G(s)$ is directed from $s$ to $v$ in $G_{\mu'}$, we have
\begin{align*}
\tw_{\mu', p'}(e) &= \tw(e) + p'(s) - p'(v)\\
&= \tw(e) - p(v)-\min \left\{\,  \tw(e^*)-p(v^*)\mid  e^* = (s,v^*)\in \delta_{G}(s)\,\right\}\\
&\ge 0.
\end{align*}

We next show the correctness of the primal update.
Since $e'=(s,t)$ itself forms an $s$--$t$ path in $G_{\mu'}$, the weight of $P_t$ is at most $\tw(e')$; hence, $\tw(\tmu)\leq \tw(\mu')+\tw(e')=\tw(\mu)$. 
Suppose to the contrary that there is a perfect matching $\mu^*$ in $G$ with $\tw(\mu^*)<\tw(\tmu)\leq \tw(\mu)$.
Then, $\mu^* \triangle \mu$ contains a negative cycle $C$ in $(G_\mu, \tw_\mu)$ (where each edge in $\mu$ is flipped).
Since $(G_\mu, w_\mu)$ has no negative cycles (by Lemma~\ref{lem:extreme}) and $\tw$ differs from $w$ only around $s$, the cycle $C$ must intersect $s$.
This implies that $C$ is the only negative cycle in $\mu^* \triangle \mu$ and hence $\tw_\mu(C) \le \tw(\mu^*) - \tw(\mu)$.
In addition, as $\bar{e}'$ is the only edge entering $s$ in $G_{\mu}$, it is contained in $C$.
By removing $\bar{e}'$ from $C$, we obtain an $s$--$t$ path $P^*_t$ in $(G_\mu, \tw_{\mu'})$ such that $\tw_{\mu'}(P^*_t)=\tw_{\mu}(C)-\tw_{\mu}(\bar{e}')\le\tw(\mu^*)-\tw(\mu)+\tw(e')<\tw(\tmu)-\tw(\mu')=\tw_{\mu'}(P_t)$, which contradicts the fact that $P_t$ is a shortest path.

Finally, we show that $\tp$ is a potential for $(G_{\tmu}, \tw_{\tmu})$.
Since $\tp(v)$ is the weight of the shortest $s$--$v$ path $P_v$ in $(G_{\mu'}, \tw_{\mu'})$ for any $v\in V$, the function $\tp$ is clearly a potential for $(G_{\mu'}, \tw_{\mu'})$.
That is, we have $\tw_{\mu', \tp}(e) = \tw_{\mu'}(e) + \tp(u) - \tp(v) \ge 0$ for each $e = (u, v) \in E_{\mu'}$.
In particular, the equality holds for every edge $e$ used in the shortest $s$--$v$ path $P_v$ for some $v\in V$.
Note that $G_{\tmu}$ is obtained from $G_{\mu'}$ by flipping the direction of the edges on $P_t$, whose $\tw_{\mu', \tp}$ values are $0$.
Therefore, $\tw_{\tmu, \tp}(e)\geq 0$ holds for every edge $e\in E_{\tmu}$, and hence $\tp$ is a potential for  $(G_{\tmu}, \tw_{\tmu})$.
\end{proof}

Using Algorithm~\ref{alg:PDU}, we can design a simple $\mathrm{O}(n(m + n \log n))$-time algorithm for the assignment problem as follows.
We first set $w'(e) \leftarrow 0$ for each $e \in E$, find any perfect matching $\mu$ in $G$, and initialize $p(v) \leftarrow 0$ for each $v \in V$.
This part is done in $\mathrm{O}(nm)$ time by a na\"{i}ve augmenting path algorithm (or faster by sophisticated algorithms, but this is not a bottleneck).
Clearly, $\mu$ is a minimum-weight perfect matching in $(G, w')$ and $p$ is a potential for $(G_\mu, w'_\mu)$.
We then repeatedly make $w'$ close to the actual input weights $w$ by updating it around a single vertex $s \in V^+$ in any order.

In each iteration, we set
\[\tw(e) \leftarrow \begin{cases}
	w(e) & (e \in \delta_G(s)),\\
	w'(e) & (\text{otherwise}),\\
\end{cases}\]
and find a minimum-weight perfect matching $\tmu$ in $(G, \tw)$ and a potential $\tp$ for $(G_\tmu, \tw_\tmu)$ by applying Algorithm~\ref{alg:PDU} to $(G, w', \mu, p, s, \tw)$, where we temporarily add the absent edges around $s$ with sufficiently large weights if necessary.
We then update $w' \leftarrow \tw$, $\mu \leftarrow \tmu$, and $p \leftarrow \tp$.

After the iterations are completed, we have $w' = w$.
The correctness and the computational time bound directly follow from Theorem~\ref{thm:PDU}.

\subsection{Other Formulations}\label{sec:unbalanced}
In this section, we discuss other polynomially equivalent formulations of our problem so that we can deal with various settings. 
Consider the situation in which we are required to find a maximum-weight (not necessarily perfect) matching in an edge-weighted (not necessarily balanced) bipartite graph.
First, by negating all the edge weights, the objective turns into minimizing the total weight of a matching.

Let $G = (V^+, V^-; E)$ be a bipartite graph with $|V^+| \ge |V^-| > 0$, and let $w \colon E \to \mathbb{R}$.
Without loss of generality, we assume that $w$ is nonpositive, since any positive-weight edge can be excluded from an optimal matching.
We say that a matching $\mu$ in $G$ is \emph{right-perfect} if all the vertices in $V^-$ appear exactly once.
By adding a vertex $v'$ to $V^+$ with an edge $(v', v)$ with weight $0$ for each $v \in V^-$, any instance can be transformed into an instance having a right-perfect optimal matching.
Note that the input size does not essentially increase.
In the following, let $(G, w)$ be the graph after this transformation, so we have $|V^+| > |V^-|$.
Let $n = |V^+|$ and $m = |E|$.

We now consider the following analogous problem: we have a minimum-weight right-perfect matching $\mu$ in an edge-weighted unbalanced bipartite graph $(G, w)$ and a potential $p$ for $(G_\mu, w_\mu)$, the edge weights $w$ change arbitrarily but only around a single vertex $s \in V$, and we are asked to find a minimum-weight right-perfect matching $\tmu$ in the new graph $(G, \tw)$ and a potential $\tp$ for $(G_\tmu, \tw_\tmu)$.

By adding $|V^+| - |V^-|$ dummy vertices to $V^-$ that are adjacent to all the vertices in $V^+$ by edges with weight $0$, this problem reduces to the problem solved in Section~\ref{sec:balanced}.
Let $(G', w')$ be the edge-weighted balanced bipartite graph obtained by this reduction, and let $V'$ and $E'$ denote its vertex set and edge set, respectively.
We then have $|V'| = 2n = \mathrm{O}(n)$, but $|E'|$ may not be bounded by $\mathrm{O}(m + n \log n)$ if $G$ is sparse and far from balanced.
Hence, this na\"{i}ve reduction may worsen the computational time bound in Theorem~\ref{thm:PDU}.

We show that this problem can be solved in $\mathrm{O}(m + n \log n)$ time by emulating Algorithm~\ref{alg:PDU} for $(G', w')$ with avoiding its explicit construction.
Let $\mu'$ and $p'$ be the corresponding optimal solutions and $(G', \tw')$ be the corresponding new graph in the reduced problem.

\begin{lemma}\label{lem:dummy}
If $(G'_{\mu'}, \tw'_{\mu'})$ contains a negative cycle, then there exists a shortest cycle that intersects at most one dummy vertex.
\end{lemma}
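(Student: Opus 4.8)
The plan is to show that among all shortest negative cycles in $(G'_{\mu'}, \tw'_{\mu'})$, there is at least one that visits at most one dummy vertex, by an exchange argument that \emph{splices together} two dummy occurrences into a shorter or equally short cycle. First I would recall the structure of the dummy vertices: they form a set $D \subseteq V^-$ of $|V^+| - |V^-|$ vertices, each adjacent to every vertex in $V^+$ by a weight-$0$ edge. Consequently, in $(G', w')$ the auxiliary edges incident to a dummy vertex all have weight $0$, and crucially, all the dummy vertices are mutually interchangeable: the subgraph they induce with $V^+$ is symmetric. The key observation I would isolate is that whenever a matched dummy vertex $d$ appears on a cycle, the cycle enters $d$ along a flipped (matching) edge of weight $0$ and leaves $d$ along a forward edge of weight $0$, so passing through any dummy vertex contributes exactly $0$ to the cycle weight.

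The main step is the exchange argument. Suppose a shortest negative cycle $C$ visits two distinct dummy vertices $d_1$ and $d_2$ (in that cyclic order). I would split $C$ at these two vertices into two arcs, $A_1$ going from $d_1$ to $d_2$ and $A_2$ going from $d_2$ back to $d_1$. Each arc begins and ends at a dummy vertex, entering/leaving through weight-$0$ edges. Because every dummy vertex has the same neighbourhood and the same (zero) incident weights, I can \emph{reroute}: using the interchangeability of $d_1$ and $d_2$, one of the two arcs can be closed into a cycle by itself by reconnecting its endpoints through a single dummy vertex, replacing the edges incident to the two dummies with edges incident to a single dummy of the same zero weight. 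This produces two closed walks whose total weight equals $\tw'_{\mu'}(C)$, so at least one of them has weight at most $\frac{1}{2}\tw'_{\mu'}(C) < 0$, hence is itself a negative closed walk, and it visits strictly fewer dummy vertices. Extracting a simple cycle from it (which does not increase the weight, since any removed sub-cycle is nonnegative as $C$ was shortest) yields a shorter or equal negative cycle through fewer dummies. Iterating drives the number of visited dummy vertices down to at most one.

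I expect the main obstacle to be making the rerouting rigorous while preserving both simplicity of the cycle and the matching-consistency of the auxiliary graph. The delicate point is that in $G'_{\mu'}$ the edges are oriented according to $\mu'$, so I cannot freely swap an entering edge for a leaving edge: I must check that after rerouting through a single dummy $d$, the resulting closed walk is still a valid directed cycle in $(G'_{\mu'}, \tw'_{\mu'})$, i.e.\ that it alternates correctly between forward and flipped edges and that $d$ is used with one incoming and one outgoing auxiliary edge. Because each dummy is matched (in a right-perfect matching, all of $V^-$, including $D$, is saturated) and all its auxiliary edges have weight $0$, the orientation pattern at a dummy vertex is uniform across $D$, which is exactly what legitimizes the swap. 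I would also need to argue that the weight is genuinely preserved and not merely bounded: the only edges altered are incident to dummies and all carry weight $0$, so the reconnection changes the weight by $0$. The remaining care is the standard one of passing from a negative closed walk to a negative simple cycle, where I would invoke that $C$ was chosen of minimum weight to guarantee that discarded sub-walks contribute nonnegatively.
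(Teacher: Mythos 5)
There is a genuine gap: your argument proves a strictly weaker statement than the lemma. The lemma asserts the existence of a \emph{shortest} cycle through at most one dummy vertex, and shortestness is exactly what is needed later (the single-dummy emulation of Algorithm~\ref{alg:PDU} must preserve minimum-weight cycles, not merely detect negativity). Your splicing step only gives $\tw'_{\mu'}(C_1)+\tw'_{\mu'}(C_2)=\tw'_{\mu'}(C)$ and hence that one piece has weight at most $\tfrac{1}{2}\tw'_{\mu'}(C)$; that piece is negative, but nothing forces its weight to be $\le \tw'_{\mu'}(C)$. Indeed, using only the minimality of $C$, both pieces could have weight exactly $\tfrac{1}{2}\tw'_{\mu'}(C)$, and iterating your reduction only degrades the bound further, so you end with ``some negative cycle through at most one dummy,'' not a shortest one. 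The ingredient you never use, and which closes this gap in the paper's proof, is the update vertex $s$: since $\mu'$ is optimal for the old weights and $\tw'$ differs from them only around $s$, Lemma~\ref{lem:extreme} implies every cycle avoiding $s$ has \emph{nonnegative} weight, and so every negative cycle passes through $s$. Now $s$ lies on exactly one of your two arcs; closing the arc that avoids $s$ yields a cycle avoiding $s$, whose weight is therefore $\ge 0$, which forces the other closed arc to have weight $\le \tw'_{\mu'}(C)$ --- i.e., it is again a \emph{shortest} cycle, with fewer edges and fewer dummies, contradicting a choice of $C$ as a shortest cycle with fewest edges. Without invoking $s$, the statement as written is out of reach of your exchange argument.

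Two secondary points. First, your description of the orientation at a dummy vertex is backwards: dummies sit on the $V^-$ side, so a cycle \emph{enters} a dummy along a forward (non-matching) edge from $V^+$ and \emph{leaves} along the flipped matching edge; this matters because the rerouting edge $(u,d_1)$ you want to add is a forward edge of $G'_{\mu'}$ only when $\mu'(u)\neq d_1$. The degenerate case $\mu'(u)=d_1$ (where the arc consists of exactly two weight-$0$ edges and cannot be closed by your swap) is real and must be handled separately, as the paper does; ``uniformity of the orientation pattern across dummies'' does not dispose of it. Second, your justification for extracting a simple cycle --- that discarded sub-cycles are nonnegative ``since $C$ was shortest'' --- is incorrect: shortestness of $C$ only bounds their weight below by $\tw'_{\mu'}(C)$, not by $0$ (in the paper's argument the pieces are automatically simple, so this step is avoided altogether).
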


\begin{proof}
Let $C$ be a shortest cycle in $(G'_{\mu'}, \tw'_{\mu'})$ that consists of as few edges as possible.
We show that $C$ intersects at most one dummy vertex.
Suppose to the contrary that $C$ intersects two distinct dummy vertices $v_1$ and $v_2$ in the extended $V^-$.
Then, $C$ is divided into two paths, $P_1$ from $v_1$ to $v_2$ and $P_2$ from $v_2$ to $v_1$.
By Lemma~\ref{lem:extreme}, the negative cycle $C$ must intersect $s$; as $s$ is not a dummy vertex, we assume that $P_1$ intersects $s$ and $P_2$ does not without loss of generality.

Let $u \in V^+$ be the vertex just before $v_1$ on $P_2$.
If ${\mu'}(u)=v_2$, then $\tw'_{\mu'}(P_2) = \tw(u, v_2) + \tw(u, v_1) = 0$ by definition.
Otherwise, $G'_{\mu'}$ contains a directed edge $e = (u, v_2)$, and one can obtain a cycle $C'$ from $P_2$ by replacing the last edge $(u, v_1)$ with $e$. 
This $C'$ does not contain $s$, and hence $\tw'_{\mu'}(P_2) = \tw'_{\mu'}(C') \ge 0$ by Lemma~\ref{lem:extreme}.
Thus, in either case, we have $\tw'_{\mu'}(P_2) \ge 0$, which implies $\tw'_{\mu'}(P_1) < 0$.

Let $u' \in V^+$ be the vertex just before $v_2$ on $P_1$.
If ${\mu'}(u')=v_1$, then $\tw'_{\mu'}(P_1) = \tw(u', v_1) + \tw(u', v_2) = 0$, a contradiction.
Thus, there exists a directed edge $e' = (u', v_1)$ in $G'_{\mu'}$, and a cycle $C''$ with $\tw'_{\mu'}(C'') \le \tw'_{\mu'}(C)$ and $|C''| < |C|$ can be obtained from $C$ by replacing the subpath consisting of $(u', v_2)$ and $P_2$ with $e'$.
This contradicts the choice of $C$; thus, we are done.
\end{proof}

\begin{lemma}\label{lem:dummy2}
For any vertex $v \in V$, there exists a shortest $s$--$v$ path in $(G'_{\mu'}, \tw'_{\mu'})$ that intersects at most one dummy vertex.
Moreover, for all the dummy vertices, the weights of the shortest paths are the same.
\end{lemma}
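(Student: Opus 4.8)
The plan is to prove Lemma~\ref{lem:dummy2} in two stages, handling the ``at most one dummy vertex'' claim first and the ``equal weights across dummies'' claim second. The structure of both arguments should closely mirror the exchange argument used in Lemma~\ref{lem:dummy}, since the key structural feature---that all dummy vertices in the extended $V^-$ are adjacent to every vertex in $V^+$ by weight-$0$ edges, and are symmetric to one another---is exactly what made that proof work. The crucial fact I would exploit throughout is that the new weights $\tw'$ differ from $w'$ only around $s$, so that outside of $s$ the no-negative-cycle property from Lemma~\ref{lem:extreme} still applies, letting me bound the weights of $s$-avoiding subpaths and cycles from below by $0$.

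For the first claim, I would argue by contradiction. Suppose every shortest $s$--$v$ path passes through at least two distinct dummy vertices, and take a shortest $s$--$v$ path $Q$ using as few edges as possible. Let $d_1$ and $d_2$ be two dummy vertices appearing on $Q$, say in the order $s, \dots, d_1, \dots, d_2, \dots, v$. The subpath $R$ of $Q$ from $d_1$ to $d_2$ is then an $s$-avoiding path (or can be arranged to avoid $s$ by choosing the two \emph{consecutive} dummies straddling the portion of $Q$ not containing $s$). Following the edge-replacement idea from Lemma~\ref{lem:dummy}, I would replace $R$---together with the vertex $u'$ immediately preceding $d_2$---by a single weight-$0$ edge $(u', d_2)$, available precisely because $d_2$ is a dummy adjacent to all of $V^+$. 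This shortcut yields an $s$--$v$ walk of no greater weight (since the removed $s$-avoiding subpath has nonnegative weight by Lemma~\ref{lem:extreme}) and strictly fewer edges, contradicting minimality. One subtlety to watch: if ${\mu'}(u') = d_2$ the edge direction is reversed, but then the two relevant weight-$0$ edges cancel exactly, as in the corresponding case of Lemma~\ref{lem:dummy}, so the bound still holds.

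For the second claim, that all dummy vertices receive the same shortest-path weight, I would use the full symmetry of the dummies: any two dummy vertices $d_i, d_j$ have identical neighborhoods (all of $V^+$) with identical weight-$0$ edge sets. Given a shortest $s$--$d_i$ path $P_i$, let $u$ be the vertex preceding $d_i$ on $P_i$. Since the edge $(u, d_j)$ also exists with weight $0$ in $G'_{\mu'}$, replacing the last edge of $P_i$ gives an $s$--$d_j$ path of the same weight, provided $u \neq \mu'^{-1}(d_j)$ and the swap does not create a repeated vertex; by the first claim I may assume $P_i$ touches no dummy other than $d_i$, so $d_j \notin P_i$ and this replacement is legitimate. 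This shows $\tp(d_j) \le \tp(d_i)$, and by symmetry the reverse inequality, giving equality.

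I expect the main obstacle to be the bookkeeping around the matched-edge cases, i.e., ensuring that when the vertex preceding a dummy is matched \emph{to} that dummy the reversed orientation is handled correctly and the shortcut edges genuinely cancel rather than spuriously changing the weight; this is where Lemma~\ref{lem:dummy} had to split into cases, and the same care is needed here. A secondary technical point is guaranteeing that the shortcut operations never introduce a repeated vertex (so that the result is a genuine path, not merely a walk); invoking the first claim to restrict attention to paths meeting only one dummy is what keeps this clean in the second part.
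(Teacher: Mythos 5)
Your handling of the first claim is essentially the paper's own argument: pick a shortest $s$--$v$ path with as few edges as possible, and if it meets two dummies, skip the segment between them, bounding the skipped $s$-avoiding portion below by $0$ by closing it into a cycle exactly as in Lemma~\ref{lem:dummy}. One slip in your write-up: the shortcut edge should run from the vertex immediately preceding the \emph{first} dummy $d_1$ into $d_2$; as written, with $u'$ the predecessor of $d_2$, ``replacing $R$ by $(u',d_2)$'' disconnects the path. That is cosmetic, and your level of rigor here matches (indeed exceeds) the paper's one-line treatment.

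The genuine gap is in the second claim. You replace the last edge $(u,d_i)$ of a shortest $s$--$d_i$ path $P_i$ by $(u,d_j)$, and you state the proviso $u \neq {\mu'}^{-1}(d_j)$ yourself --- but you never discharge it: invoking the first claim only yields $d_j \notin P_i$, which rules out a repeated vertex, not this proviso. The proviso can fail: if $u$ is matched to $d_j$, then in $G'_{\mu'}$ the pair $\{u,d_j\}$ is oriented as the reversed matching edge $(d_j,u)$, so the edge $(u,d_j)$ you want to splice in simply does not exist. A concrete scenario is that $s$ itself is matched to the dummy $d_j$ and $P_i$ is the single edge $(s,d_i)$. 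The paper's proof avoids this trap by routing through $d_i$'s \emph{own} partner instead: letting $u_i$ be the vertex with $(u_i,d_i)\in\mu'$, it appends the two weight-$0$ edges $(d_i,u_i)$ and $(u_i,d_j)$ to the shortest $s$--$d_i$ path; the edge $(u_i,d_j)$ is guaranteed to exist and to be forward-oriented, because the dummy $d_j$ is adjacent to all of $V^+$ and $u_i$ is matched to $d_i \neq d_j$. Your argument can be repaired either by switching to that construction, or by adding the observation that for $u \neq s$ the only edge entering $u$ in $G'_{\mu'}$ is the reversed matching edge $(\mu'(u),u)$, so $u = {\mu'}^{-1}(d_j)$ would force $d_j$ onto $P_i$ and contradict your appeal to the first claim; the corner case $u = s$, however, still needs the paper's detour through $u_i$.
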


\begin{proof}
The first claim is shown by almost the same argument as the proof of Lemma~\ref{lem:dummy}.
Pick a shortest $s$--$v$ path $P$ in $(G'_{\mu'}, \tw'_{\mu'})$ that consists of as few edges as possible.
If $P$ intersects two distinct dummy vertices $v_1$ and $v_2$, then the subpath between $v_1$ and $v_2$ can be skipped without increasing the weight, which contradicts the minimality of $P$.

To see the second claim, for two distinct dummy vertices $v_1$ and $v_2$, let $u_1$, $u_2$ be the vertices such that $(u_1, v_1), (u_2, v_2)\in \mu'$.
Then, by definition, $(G'_{\mu'}, \tw'_{\mu'})$ has four edges $(v_1, u_1)$, $(v_2, u_2)$, $(u_1, v_2)$, and $(u_2, v_1)$ of weight $0$, and hence we have $d(v_2) \le d(v_1)$ and $d(v_1) \le d(v_2)$.
\end{proof}

Lemmas~\ref{lem:dummy} and \ref{lem:dummy2} imply that, in order to emulate Algorithm~\ref{alg:PDU} for $(G', w')$, it suffices to introduce only one dummy vertex $x$ that is adjacent to all the vertices in $V^+$ with edges of weight $0$ and to modify the perfect matching constraint so that $x$ appears exactly $|V^+| - |V^-|$ times.
Thus, we have completed the proof.

\section{An Improved Variant of Envy-Cycle Procedure}\label{sec:improved}
In this section, as an application of our algorithm, we present an improved variant of the \emph{envy-cycle procedure} for finding an envy-free allocation of indivisible items.
We describe the problem setting and the envy-cycle procedure in Section~\ref{sec:EFA}, and demonstrate how to improve its computational time via the assignment problem in Section~\ref{sec:improved2}.

\subsection{Envy-Free Allocation of Indivisible Items}\label{sec:EFA}
Consider a situation in which $n$ agents share $m$ indivisible items.
Let $N = [n] \coloneqq \{1, 2, \dots, n\}$ and $M = [m]$ be the sets of agents and items, respectively.
Each item subset $X \subseteq M$ is called a \emph{bundle}.
Each agent $i \in N$ has a \emph{valuation} $v_i \colon 2^M \to \mathbb{R}_{\ge 0}$ meaning that agent $i$ evaluates a bundle $X \subseteq M$ by a nonnegative real $v_i(X)$. We assume that $v_i$ is \emph{monotone}, i.e., $X \subseteq Y \subseteq M \implies v_i(X) \le v_i(Y)$.
In addition, $v_i$ is said to be \emph{additive} if $v_i(X) = \sum_{k \in X} v_{i, k}$ for every bundle $X \subseteq M$, where $v_{i, k} \coloneqq v_i(\{k\})$ $(k \in M)$.

An \emph{allocation} is an $N$-indexed subpartition $(X_i)_{i \in N}$ of $M$, i.e., $X_i \subseteq M$ for each $i\in N$ and $X_{i} \cap  X_{j} = \emptyset$ for distinct $i, j\in N$.
In an allocation $(X_i)_{i \in N}$, an agent $i \in N$ \emph{envies} another agent $j \in N$ if $v_{i}(X_{i}) < v_{i}(X_{j})$.
Since each item is indivisible, it is usually impossible to achieve a completely envy-free allocation of the whole item set $M$.
In the following, we describe two well-studied relaxations of envy-freeness \cite{CKMPSW2019, Lipton2004}.

\begin{definition}\label{def:EF1}
Let $\cX = (X_i)_{i \in N}$ be an allocation.
\begin{itemize}
\item $\cX$ is \emph{envy-free up to one item} (\emph{EF1} for short) if, for every pair of agents $i, j \in N$, we have $v_{i}(X_{i}) \ge v_{i}(X_{j})$ or $v_{i}(X_{i}) \ge v_{i}(X_{j} \setminus \{k\})$ for some $k \in X_{j}$.
\item $\cX$ is \emph{envy-free up to any item} (\emph{EFX} for short) if, for every pair of agents $i, j \in N$, we have $v_{i}(X_{i}) \ge v_{i}(X_{j} \setminus \{k\})$ for every $k \in X_{j}$.
\end{itemize}
\end{definition}

Clearly, an EFX allocation is EF1.
It is known that an EF1 allocation always exists and can be found by a rather simple algorithm, the so-called \emph{envy-cycle procedure} \cite{Lipton2004}, whereas the existence of an EFX allocation is known for very restricted situations and is widely open in general (see, e.g., \cite{ALMW2022}).

The envy-cycle procedure is summarized as follows.
We initialize $\cX \leftarrow (X_i)_{i \in N}$ with $X_i = \emptyset$ for each $i \in N$, and distribute the items one-by-one in any order.
In each iteration, we consider the \emph{envy graph} $H(\cX) = (N, F(\cX))$ for the current allocation $\cX$, which is defined by
\[F(\cX) \coloneqq \{\, (i, j) \mid i, j \in N,~v_{i}(X_{i}) < v_{i}(X_{j}) \,\}.\]
While $H(\cX)$ contains a cycle, we update $\cX$ by exchanging the bundles along it; that is, we pick a cycle $C$ and update $X_{i} \leftarrow X_{j}$ for all the edges $(i, j) \in C$ simultaneously.\footnote{This exchange is usually repeated until some agent has no incoming edges, and it suffices. We employ the present form to make its behavior analogous to the improved variant of the envy-cycle procedure described in the next section.}
Subsequently, as $H(\cX)$ has no cycles, there exists an agent $i \in N$ having no incoming edges.
To such an agent $i$, we distribute an unallocated item, say $k \in M\setminus \bigcup_{j\in N}X_j$; that is, we set $X_i \leftarrow X_i \cup \{k\}$.

It is easy to observe that at any point of this algorithm, $\cX$ is an EF1 allocation of the set of items distributed so far.
The computational time is bounded by counting the number of edges in $H(\cX)$, which can increase only when an item is distributed and must decrease when the bundles are exchanged along a cycle.

\begin{theorem}[Lipton et al.~{\cite[Theorem~2.1]{Lipton2004}}]\label{thm:Lipton}
The envy-cycle procedure correctly finds an EF1 allocation in $\mathrm{O}(mn^3)$ time.
\end{theorem}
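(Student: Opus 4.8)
\textbf{Proof proposal for Theorem~\ref{thm:Lipton}.}
The plan is to establish correctness and the running-time bound separately, treating the correctness part as an invariant maintained throughout the procedure and the time bound as a careful accounting of the total number of edge-exchange and item-distribution operations. For correctness, I would argue by induction on the number of distributed items that the current allocation $\cX$ is always an EF1 allocation of the set of items handed out so far. The base case is the empty allocation, which is vacuously EF1. For the inductive step, I would examine the two operations that modify $\cX$: exchanging bundles along a cycle $C$ of the envy graph $H(\cX)$, and distributing an unallocated item to an agent with no incoming edge. The key claim for the cycle-exchange step is that swapping bundles along a directed cycle does not destroy the EF1 property and, moreover, (weakly) increases each participating agent's valuation of its own bundle, so no new envy relations are created relative to what EF1 already tolerates.

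The heart of the argument is the item-distribution step. Here I would first show that once $H(\cX)$ is acyclic, a source (an agent with no incoming edge) must exist; this is the standard fact that a finite directed acyclic graph has a vertex of in-degree zero. Then I would verify that giving a single fresh item $k$ to such a source agent $i$ preserves EF1. The crucial observation is that because $i$ had no incoming edge before the item was added, no agent $j$ envied $i$, i.e. $v_j(X_j) \ge v_j(X_i)$ for all $j$; hence after adding one item to $X_i$, any agent $j$ satisfies $v_j(X_j) \ge v_j(X_i) = v_j((X_i \cup \{k\}) \setminus \{k\})$, which is exactly the EF1 condition witnessed by removing $k$. Envy relations among the other agents are unaffected since only $X_i$ changed, so the EF1 invariant is restored after distribution.

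For the running time, the plan is to bound the total work by the number of edges processed in $H(\cX)$ across all iterations. I would argue that the envy graph has at most $\mathrm{O}(n^2)$ edges, and that its edge count can increase only when an item is distributed. Each of the $m$ distribution steps adds at most $\mathrm{O}(n)$ new envy edges (those pointing into the agent who received an item). Thus the cumulative number of edges ever added over the whole run is $\mathrm{O}(mn)$. Since each cycle-cancellation strictly decreases the edge count by at least one, the total number of cycle cancellations over the entire procedure is also $\mathrm{O}(mn)$. Each individual iteration requires finding a cycle (or certifying acyclicity) and rearranging bundles, which takes $\mathrm{O}(n^2)$ time on a graph with $\mathrm{O}(n^2)$ edges, plus recomputing the affected envy edges. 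Multiplying the $\mathrm{O}(mn)$ bound on the number of operations by the $\mathrm{O}(n^2)$ per-operation cost yields the $\mathrm{O}(mn^3)$ bound.

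The main obstacle I anticipate is getting the amortized accounting exactly right rather than any single structural claim. The subtlety is that the bundles themselves change during cycle exchanges, so the valuations $v_i(X_j)$ must be recomputed, and one must be careful to charge this recomputation cost correctly against either the distribution steps or the cycle cancellations. In particular, verifying that the exchange along a cycle genuinely decreases the edge count (so the potential argument goes through) requires showing that after a swap the agents on the cycle no longer envy along the direction of the cycle, which relies on the strict-inequality definition of the envy edges and the fact that each agent receives a bundle it strictly preferred. I would make this precise by treating the number of edges of $H(\cX)$ as a potential function and checking that cycle cancellation strictly lowers it while distribution raises it by at most $\mathrm{O}(n)$; the rest of the bound then follows by summing over the $m$ distribution events.
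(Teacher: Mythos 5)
Your proposal is correct and follows essentially the same argument the paper relies on (and attributes to Lipton et al.): maintain the invariant that $\cX$ is EF1 after every cycle-exchange and every distribution to a source agent, and bound the running time by the amortized edge count of $H(\cX)$, which increases by $\mathrm{O}(n)$ only at each of the $m$ distribution steps and strictly decreases at each $\mathrm{O}(n^2)$-time cycle cancellation. Your fleshed-out details (the source agent is unenvied, so removing the fresh item witnesses EF1; exchanges permute bundles without creating new edges) are exactly the standard justification the paper's brief sketch presupposes.
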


In \cite{BK2020}, it was pointed out that if the valuations are additive and the values of items are in the same order for all the agents, this algorithm can be employed to find an EFX allocation.

\begin{theorem}[Barman and Krishnamurthy~{\cite[Lemma~3.5]{BK2020}}]\label{thm:Lipton2}
Suppose that the valuations of all agents are additive, and $v_{i,1} \ge v_{i,2} \ge \dots \ge v_{i,m}$ for every agent $i \in N$.
Then, if the items are distributed in ascending order of $k = 1, 2, \dots, m$, the envy-cycle procedure correctly finds an EFX allocation in $\mathrm{O}(mn^3)$ time.
\end{theorem}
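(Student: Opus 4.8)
The plan is to treat the two assertions separately. The $\mathrm{O}(mn^3)$ running-time bound requires no new argument: the algorithm here is literally the envy-cycle procedure of Theorem~\ref{thm:Lipton}, and neither additivity nor the prescribed distribution order affects the edge-counting analysis that yields that bound, so I would simply invoke Theorem~\ref{thm:Lipton} for the complexity. The entire new content is that the output is EFX rather than merely EF1, and for this I would maintain, as a loop invariant proved by induction over the two basic operations (cancelling a cycle and distributing one item), that $\cX$ is EFX at every point of the execution, starting from the trivially-EFX empty allocation.

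The first step is to record how the two hypotheses simplify the EFX condition. Since every $v_i$ is additive, $v_i(X_j \setminus \{k\}) = v_i(X_j) - v_{i,k}$, so among $k \in X_j$ the constraint in Definition~\ref{def:EF1} is tightest when $v_{i,k}$ is smallest, i.e.\ for the item of largest index in $X_j$ under the common order $v_{i,1} \ge \dots \ge v_{i,m}$. Crucially, because this order is shared by all agents, that binding item is the same for every $i$: it is the item $k^\star \in X_j$ distributed last among those currently in $X_j$. Hence EFX of $i$ towards $j$ reduces to the single inequality $v_i(X_i) \ge v_i(X_j \setminus \{k^\star\})$, and moreover the item distributed most recently overall is, by the ascending distribution order together with the common ranking, the least valuable item for \emph{every} agent among all items handed out so far.

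With this in hand both inductive steps are short. For a cycle cancellation along $C$, every agent $i$ on $C$ moves from $X_i$ to a strictly preferred bundle $X_j$ with $(i,j) \in C \subseteq F(\cX)$, so each agent's own value weakly increases while the multiset of bundles is merely permuted; writing $Y_a$ for the new bundle of agent $a$, I would deduce $v_a(Y_a) \ge v_a(X_a) \ge v_a(Y_b \setminus \{k\})$ for every $b$ and every $k \in Y_b$ from the pre-exchange EFX property (applied to $a$ and the old holder of $Y_b$), so EFX is preserved. For the distribution step, the item $k$ is added to an agent $i$ with no incoming edge in $H(\cX)$, i.e.\ $v_\ell(X_\ell) \ge v_\ell(X_i)$ for all $\ell$: nobody envies $i$. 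After the addition, $k$ is the least valuable item in $X_i \cup \{k\}$ for every $\ell$, so by the reduction above the only EFX constraint towards $i$ that could be violated is $v_\ell(X_\ell) \ge v_\ell((X_i \cup \{k\}) \setminus \{k\}) = v_\ell(X_i)$, which holds precisely because $i$ was unenvied; the remaining pairs are unaffected, and $i$'s own envy only weakens as $v_i(X_i)$ grows.

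I expect the delicate point to be exactly the coupling of the two hypotheses in the distribution step: it is the common ranking that makes the newly added item simultaneously least valuable for all agents, and it is the ascending distribution order that makes this newest item globally smallest, so that removing it recovers the previous bundle $X_i$ and lets the no-incoming-edge condition close the argument. Isolating the reduction of EFX to a single per-pair inequality is what makes the induction go through cleanly; without the common order one would have to track a different binding item for each ordered pair, and the item-addition step would no longer collapse to the unenvied condition.
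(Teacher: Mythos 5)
Your proposal is correct. The paper gives no proof of this statement at all --- it is quoted verbatim from Barman and Krishnamurthy~\cite{BK2020} --- and your argument (use additivity plus the common item order to reduce each EFX constraint to the single inequality obtained by removing the last-distributed item from the envied bundle, observe that cycle cancellation preserves EFX since bundles are merely permuted while each agent's value weakly increases, and close the item-distribution step by noting that the newly added item is the least valuable one for every agent so that the binding constraint collapses to the unenvied condition) is essentially the standard proof of the cited lemma, so there is nothing to correct and no divergence from the source to report.
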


\begin{remark}
When the valuations are additive, an EF1 allocation can be computed much faster by the \emph{round-robin protocol} \cite{CKMPSW2019}: based on any fixed-in-advance order, the agents select the most preferable remaining item one by one until no item is left.
This procedure can be implemented to run in $\mathrm{O}(nm \log m)$ time, where the bottleneck is sorting the items for each agent.

This procedure, however, cannot be easily extended to nonadditive valuations.
If the agents select items one by one in a round-robin fashion so that each agent selects an item achieving the maximum increase of the value of his/her bundle, then the resultant allocation is not necessarily EF1; such an instance is easily constructed, e.g., by considering the situation when there are complementary items (i.e., the valuations are superadditive for some pair of items).
\end{remark}

\subsection{Improvement via Assignment Problem}\label{sec:improved2}
First, we slightly modify the definition of an allocation by regarding it as a combination of a ``division'' and an ``assignment.''
Hereafter,  an \emph{allocation}  is a pair $(\cX, \mu)$ of an $N$-indexed subpartition $\cX = (X_i)_{i \in N}$ of $M$ and a bijection $\mu \colon N \to \cX$, which means that bundle $\mu(i)\in \cX$ is assigned to agent $i\in N$.
We say that $(\cX, \mu)$ is \emph{EF1} or \emph{EFX} if this is true for subpartition $(X'_i)_{i \in N}$ with $X'_i = \mu(i)$ $(i \in N)$ in the sense of the original definition (see Definition~\ref{def:EF1}).
The \emph{welfare} of an allocation $(\cX, \mu)$ is defined as $W(\cX, \mu) = \sum_{i \in N} v_i(\mu(i))$.

\begin{definition}\label{def:assign-envy_graph}
For an allocation $(\cX, \mu)$ with the bundle set $\cX = (X_i)_{i \in N}$, the \emph{assign-envy graph} $G(\cX, \mu) = (N, \cX; E(\cX, \mu))$ is a bipartite graph with edge weights $w \colon E(\cX, \mu) \to \RR$ defined by
\begin{align*}
  E(\cX, \mu) &\coloneqq \{\, (i, \mu(i)) \mid i \in N \,\} \cup \{\, (i, X_{j}) \mid i\in N, ~X_{j} \in \cX,~v_{i}(\mu(i)) < v_{i}(X_j) \,\},\\
  w(e) &\coloneqq -v_{i}(X_j) \quad (e = (i, j) \in E(\cX, \mu)).
\end{align*}
\end{definition}

Note that the bijection $\mu\colon N\to \cX$ is regarded as a perfect matching in $G(\cX, \mu)$.\footnote{The negative sign appearing in the definition of edge weights is used for a clear correspondence to the assignment problem, in which the objective is to find a minimum-weight perfect matching whereas in the current problem an allocation with a large welfare is preferable.}
Also note that if we contract each edge $(i, \mu(i))$ in $G(\cX, \mu)$, the resulting graph will coincide with the envy graph $H(\cX')$ for $\cX' = (X'_i)_{i \in N}$ with $X'_i = \mu(i)$ $(i \in N)$, which has been defined in Section~\ref{sec:EFA}.
Thus, a cycle in $H(\cX')$ corresponds to a cycle in $G(\cX, \mu)_\mu$ and vice versa.

\begin{lemma}\label{lem:negative_cycle}
Let $(G, w)$ be the assign-envy graph for an allocation $(\cX, \mu)$, and $(G_\mu, w_\mu)$ be the auxiliary weighted graph for $\mu$.
If $G_\mu$ has a cycle $C$, then the weight $w_\mu(C)$ is negative.
\end{lemma}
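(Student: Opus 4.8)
The plan is to read off the structure of an arbitrary directed cycle $C$ in $G_\mu$ (writing $G = G(\cX, \mu)$) and to evaluate $w_\mu(C)$ term by term. Recall that $\mu$ consists exactly of the edges $(i, \mu(i))$ for $i \in N$, so in the auxiliary graph $G_\mu$ each of these is reversed. Consequently, the edges leaving an agent $i$ in $G_\mu$ are precisely the envy edges $(i, X_j)$ with $v_i(\mu(i)) < v_i(X_j)$, directed from $N$ to $\cX$, while the only edge entering $i$ is the reversed matching edge coming from its own bundle $\mu(i)$. First I would record the auxiliary weights of these two kinds of edges: a forward envy edge $(i, X_j)$ retains weight $w_\mu = w(i, X_j) = -v_i(X_j)$, whereas the reversed matching edge $(\mu(i), i)$ has weight $w_\mu = -w(i, \mu(i)) = v_i(\mu(i))$.

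Next I would describe $C$ concretely. Since $G$ is bipartite and $\mu$ is a perfect matching, $C$ alternates between forward envy edges and reversed matching edges. Let $i_1, i_2, \dots, i_k$ (with $i_{k+1} \coloneqq i_1$) be the agents visited by $C$ in order. The forward edge leaving $i_\ell$ ends at some bundle, and the following reversed edge must return to the agent matched to that bundle, namely $i_{\ell+1}$; hence the forward edge out of $i_\ell$ is exactly $(i_\ell, \mu(i_{\ell+1}))$, and being an envy edge it satisfies $v_{i_\ell}(\mu(i_\ell)) < v_{i_\ell}(\mu(i_{\ell+1}))$.

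The key step is then a direct summation. Grouping the two edges of $C$ incident to each agent $i_\ell$ --- the reversed matching edge entering $i_\ell$, of weight $v_{i_\ell}(\mu(i_\ell))$, and the envy edge leaving $i_\ell$, of weight $-v_{i_\ell}(\mu(i_{\ell+1}))$ --- their combined contribution is $v_{i_\ell}(\mu(i_\ell)) - v_{i_\ell}(\mu(i_{\ell+1}))$, which is strictly negative by the inequality above. Summing over $\ell = 1, \dots, k$ accounts for every edge of $C$ exactly once and yields $w_\mu(C) = \sum_{\ell=1}^{k} \bigl( v_{i_\ell}(\mu(i_\ell)) - v_{i_\ell}(\mu(i_{\ell+1})) \bigr) < 0$.

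There is no serious obstacle here; the computation is routine once the cycle is decomposed correctly. The only points requiring care are the sign conventions of the auxiliary weight $w_\mu$ (so that the reversed matching edges contribute positively and the envy edges negatively), and the observation that every edge leaving an agent in $G_\mu$ is a genuine envy edge rather than the matching edge --- which is what makes the strict inequality, and hence the strict negativity of $w_\mu(C)$, available.
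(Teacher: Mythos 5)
Your proof is correct and is essentially the paper's argument in explicit form: the paper computes the same quantity by noting that $w_\mu(C) = W(\cX, \mu) - W(\cX, \mu')$, where $\mu'$ is the assignment after exchanging bundles along $C$, and this welfare difference is exactly your term-by-term sum $\sum_{\ell} \bigl( v_{i_\ell}(\mu(i_\ell)) - v_{i_\ell}(\mu(i_{\ell+1})) \bigr)$, each summand strictly negative by the envy-edge condition. No substantive difference in approach.
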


\begin{proof}
If an allocation is updated by exchanging bundles along a cycle in the envy graph, each agent in the cycle receives a more valuable bundle; hence, the welfare strictly increases.
This is also true for the assign-envy graph.
By the definitions of $w$ and $(G_\mu, w_\mu)$, the weight $w_\mu(C)$ is equal to $W(\cX, \mu) - W(\cX, \mu') < 0$, where $\mu'$ is the assignment after the exchange along $C$.
\end{proof}

\begin{lemma}\label{lem:min-weight_perfect_matching}
Let $(G, w)$ be the assign-envy graph for an allocation $(\cX, \mu)$, and $\mu^*$ be a minimum-weight perfect matching in $(G, w)$.
Let $(G^*, w^*)\coloneqq G(\cX, \mu^*)$ and let $(G^*_{\mu^*}, w^*_{\mu^*})$ be the auxiliary weighted graph for $\mu^*$.
Then, the directed graph $G^*_{\mu^*}$ has no cycles.
\end{lemma}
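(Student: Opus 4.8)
The plan is to combine Lemma~\ref{lem:negative_cycle}, applied to the assign-envy graph $(G^*, w^*) = G(\cX, \mu^*)$, with Lemma~\ref{lem:extreme}, applied to the minimum-weight perfect matching $\mu^*$ in $(G, w)$. Lemma~\ref{lem:negative_cycle} already tells us that \emph{every} cycle in $G^*_{\mu^*}$ has strictly negative weight, so it suffices to prove that $G^*_{\mu^*}$ has \emph{no negative cycle}; the two facts together then force $G^*_{\mu^*}$ to have no cycle at all. To rule out negative cycles, I would show that $G^*_{\mu^*}$ is, as a weighted digraph, a subgraph of $G_{\mu^*}$, and then invoke the fact from Lemma~\ref{lem:extreme} that $(G_{\mu^*}, w_{\mu^*})$ has no negative cycle because $\mu^*$ is a minimum-weight perfect matching.

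The heart of the argument is the edge-set comparison between $G^*_{\mu^*}$ and $G_{\mu^*}$. Both auxiliary graphs flip the same matching $\mu^*$, so their reversed matching edges $(\mu^*(i), i)$ coincide, and such an edge has the same weight $v_i(\mu^*(i))$ in both, since the weight of $(i, \mu^*(i))$ depends only on $i$ and the bundle $\mu^*(i)$, not on the assignment. For the forward edges, a forward edge of $G^*_{\mu^*}$ is an envy edge $(i, X_j)$ of $G^*$, i.e.\ one with $v_i(\mu^*(i)) < v_i(X_j)$; I must check that it is also a forward edge of $G_{\mu^*}$, namely that $(i, X_j) \in E(\cX, \mu)$ and $X_j \ne \mu^*(i)$, and that its weight $-v_i(X_j)$ again matches. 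The key inequality that makes this inclusion work is $v_i(\mu^*(i)) \ge v_i(\mu(i))$ for every $i \in N$: since $\mu^*$ is a perfect matching in $G = G(\cX, \mu)$, each edge $(i, \mu^*(i))$ lies in $E(\cX, \mu)$, so by the definition of $E(\cX, \mu)$ either $\mu^*(i) = \mu(i)$ (equality) or $(i, \mu^*(i))$ is an envy edge with $v_i(\mu(i)) < v_i(\mu^*(i))$.

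Granting this inequality, any forward edge $(i, X_j)$ of $G^*_{\mu^*}$ satisfies $v_i(\mu(i)) \le v_i(\mu^*(i)) < v_i(X_j)$, so $(i, X_j)$ is an envy edge of $G(\cX, \mu)$ with $X_j \ne \mu^*(i)$, hence a forward edge of $G_{\mu^*}$ of the same weight. Consequently every cycle $C$ in $G^*_{\mu^*}$ is a cycle in $G_{\mu^*}$ with $w^*_{\mu^*}(C) = w_{\mu^*}(C) \ge 0$ by Lemma~\ref{lem:extreme}, contradicting the negativity guaranteed by Lemma~\ref{lem:negative_cycle}; thus $G^*_{\mu^*}$ has no cycle. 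I expect the main obstacle to be setting up the weighted-subgraph inclusion cleanly, in particular isolating and proving the monotonicity inequality $v_i(\mu^*(i)) \ge v_i(\mu(i))$ from the mere membership $\mu^* \subseteq E(\cX, \mu)$, and then carefully matching the definitions of the two assign-envy graphs so that each forward edge of $G^*_{\mu^*}$ is certified to reappear in $G_{\mu^*}$ with an identical weight.
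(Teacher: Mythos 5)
Your proof is correct, but it is organized differently from the paper's. The paper argues directly by contradiction: given a cycle $C$ in $G^*_{\mu^*}$, which is negative by Lemma~\ref{lem:negative_cycle}, it exchanges bundles along $C$ to obtain a new assignment $\mu'$, uses the chain $v_i(\mu'(i)) > v_i(\mu^*(i)) \ge v_i(\mu(i))$ to certify that every edge of $\mu'$ lies in the original graph $G = G(\cX, \mu)$, and then contradicts the minimality of $\mu^*$ because $w(\mu') = -W(\cX, \mu') < -W(\cX, \mu^*) = w(\mu^*)$; Lemma~\ref{lem:extreme} is never invoked. You instead prove the stronger structural statement that $(G^*_{\mu^*}, w^*_{\mu^*})$ is a weighted subgraph of $(G_{\mu^*}, w_{\mu^*})$ --- transferring \emph{every} forward envy edge, not just those on one cycle, via the same key inequality $v_i(\mu^*(i)) \ge v_i(\mu(i))$ --- and then let the two cited lemmas collide: every cycle of $G^*_{\mu^*}$ is negative by Lemma~\ref{lem:negative_cycle}, yet, being a cycle of $G_{\mu^*}$, it is nonnegative by Lemma~\ref{lem:extreme} (the $k = n$ case applied to the minimum-weight perfect matching $\mu^*$). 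The combinatorial core is shared, since the cycle-exchange argument underlying Lemma~\ref{lem:extreme} is exactly the step the paper performs inline; so neither route is deeper. Yours buys an explicit, reusable containment $G^*_{\mu^*} \subseteq G_{\mu^*}$ and a clean two-black-box structure, while the paper's version only needs to inspect the edges of the single cycle at hand and stays self-contained modulo Lemma~\ref{lem:negative_cycle}.
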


\begin{proof}
Suppose to the contrary that $G^*_{\mu^*}$ has a cycle $C$.
We then have $w^*_{\mu^*}(C) < 0$ by Lemma~\ref{lem:negative_cycle}.
Let $\mu'$ be the assignment obtained from $\mu^*$ by the exchange along $C$.
We then have $W(\cX, \mu') > W(\cX, \mu^*)$ and $v_i(\mu'(i)) > v_i(\mu^*(i)) \ge v_i(\mu(i))$ for every $i \in N$ with $\mu'(i) \neq \mu^*(i)$.
The latter implies that all the edges in $\mu'$ exist in $G$, and then the former contradicts that $\mu^*$ is a minimum-weight perfect matching in $(G, w)$.
\end{proof}

By Lemma~\ref{lem:min-weight_perfect_matching}, instead of repeatedly exchanging the bundles along cycles in the envy graph, it suffices to find a minimum-weight perfect matching in the assign-envy graph once in each iteration of the envy-cycle procedure.
Note that if $(\cX, \mu)$ is an EF1 allocation, then $(\cX, \mu^*)$ is also EF1 since $v_i(\mu^*(i)) \ge v_i(\mu(i))$ for every $i \in N$ by the definition of the assign-envy graph.\footnote{Even if we start with the same EF1 allocation, the EF1 allocation obtained from a minimum-weight perfect matching may be different from that obtained by repeated exchanges in the original procedure, since the envy graph will change after each exchange.}
Thus, we obtain a variant of the envy-cycle procedure as shown in Algorithm~\ref{alg:ECP-AP}.

\begin{algorithm}[h]
	\caption{~ Envy-Cycle Procedure via Assignment Problem}
	\begin{description}
		\setlength{\itemsep}{-1mm}
		\item[Input:] A set $N$ of agents, a set $M$ of items, and valuations $v_{i}$ $(i \in N)$.
		\item[Output:] An EF1 allocation $(\cX, \mu)$ of $M$.
	\end{description}
	\begin{enumerate}
		\setlength{\itemsep}{0mm}
		\setlength{\leftskip}{-1mm}
		\item
		Initialize $\cX \leftarrow (X_i)_{i \in N}$ with $X_i = \emptyset$ for each $i \in N$, and $\mu(i) \leftarrow X_i$ for each $i \in N$.
		\item
		For each item $k \in M$ (in any order), do the following.\vspace{-1mm}
		\begin{enumerate}
			\setlength{\itemsep}{.5mm}
			\item
			Find a minimum-weight perfect matching $\mu^*$ in $(G(\cX, \mu), w)$, and update $\mu \leftarrow \mu^*$.
			\item
			Pick a bundle $X_i \in \cX$ that has only one incident edge in $G(\cX, \mu)$ (which is $(\mu^{-1}(X_i), X_i) \in \mu$), and update $X_i \leftarrow X_i \cup \{k\}$.
		\end{enumerate}
		\item
		Return $\cX$.
	\end{enumerate}
	\label{alg:ECP-AP}
\end{algorithm}

The number of edges in $G(\cX, \mu)$ is always bounded by $\mathrm{O}(n^2)$.
Hence, using the Hungarian method in Step~2(a), this algorithm correctly finds an EF1 allocation in $\mathrm{O}(mn^3)$ time, which is the same bound as Theorem~\ref{thm:Lipton}.

We improve this bound with the aid of Algorithm~\ref{alg:PDU} given in Section~\ref{sec:balanced}.
We deal with $G(\cX, \mu)$ as the complete bipartite graph by adding all the absent edges with sufficiently large weights.
Then, in each iteration of Step~2, only the weights of edges around bundle $X_i \in \cX$ picked in Step~2(b) change; hence, we can correctly update the current assignment $\mu$ and the corresponding potential in $\mathrm{O}(n^2)$ time by Algorithm~\ref{alg:PDU}.
Thus, the total computational time is bounded by $\mathrm{O}(mn^2)$, leading to the following theorems, which are analogous to Theorems~\ref{thm:Lipton} and \ref{thm:Lipton2}.

\begin{theorem}\label{thm:Lipton_improved}
Algorithm~\ref{alg:ECP-AP} correctly finds an EF1 allocation in $\mathrm{O}(mn^2)$ time.
\end{theorem}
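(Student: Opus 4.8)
The plan is to verify the two assertions of the statement—correctness (the returned allocation is EF1) and the $\mathrm{O}(mn^2)$ running time—largely separately, leaning on Lemma~\ref{lem:min-weight_perfect_matching} for the former and on Theorem~\ref{thm:PDU} for the latter.

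For correctness I would carry, as an invariant, the claim that after each pass through Step~2 the pair $(\cX,\mu)$ is an EF1 allocation of the items distributed so far; this is trivial initially, since all bundles are empty. In the inductive step, Step~2(a) replaces $\mu$ by a minimum-weight perfect matching $\mu^*$ of $G(\cX,\mu)$, and by Definition~\ref{def:assign-envy_graph} every edge $(i,\mu^*(i))$ is either the old matching edge or a strict-envy edge, so $v_i(\mu^*(i)) \ge v_i(\mu(i))$ for all $i$; combined with the inductive hypothesis this keeps $(\cX,\mu^*)$ EF1. Lemma~\ref{lem:min-weight_perfect_matching} makes the auxiliary graph of $\mu^*$ acyclic, so the contracted envy graph is a DAG and hence has a source, i.e.\ a bundle whose only incident edge is its matching edge; this validates the selection in Step~2(b). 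Finally, adjoining the new item $k$ to such an unenvied bundle $X_i$ preserves EF1, since any other agent $j$ already satisfied $v_j(\mu^*(j)) \ge v_j(X_i) = v_j\big((X_i\cup\{k\})\setminus\{k\}\big)$, so the single item $k$ witnesses EF1.

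For the running time I would treat $G(\cX,\mu)$ as the complete bipartite graph on $N\times\cX$ with the absent edges given a large padding weight, yielding $\mathrm{O}(n^2)$ edges, and I would carry the matching $\mu$ together with a potential $p$ from one iteration to the next. The main change to the instance between two consecutive executions of Step~2(a) is that one bundle $X_i$—the one enlarged in the preceding Step~2(b)—now has larger valuations $v_j(X_i\cup\{k\})$, so the weights $-v_j(X_i\cup\{k\})$ of the edges entering the vertex $X_i$ decrease. This is a weight update localized at the single vertex $X_i$, and Theorem~\ref{thm:PDU} (applied in the $V^-$-version furnished by the symmetry noted in Section~\ref{sec:balanced}) then recomputes the optimal matching and potential in $\mathrm{O}(n^2+n\log n)=\mathrm{O}(n^2)$ time; summing over the $m$ items gives the claimed $\mathrm{O}(mn^2)$ bound.

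The point that needs the most care—and what I expect to be the real obstacle—is that enlarging $X_i$ is not quite a pure single-vertex perturbation: it also raises the value $v_{j^*}(X_i\cup\{k\})$ of the bundle's owner $j^* = \mu^{-1}(X_i)$, which is exactly the envy threshold governing $j^*$'s outgoing edges, so some edges $(j^*,X_{j''})$ switch from strict-envy edges to absent (padded) ones. These extra changes sit at $j^*$, not at $X_i$, and thus fall outside the stated precondition of Algorithm~\ref{alg:PDU}. I would resolve this by observing that the changes at $j^*$ only \emph{increase} weights (an edge of weight $-v_{j^*}(X_{j''})$ is replaced by the large padding weight), and that increases are harmless for the argument behind Theorem~\ref{thm:PDU}: they preserve the nonnegativity of the reduced weights, so the adjusted potential $p'$ stays valid, and—since a weight increase can never create a negative cycle—any negative cycle produced by the update still has to pass through the vertex $X_i$ at which weights decreased. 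Hence the shortest-path computation of Algorithm~\ref{alg:PDU}, seeded at $X_i$ and flipping the edge $(j^*,X_i)$, remains correct verbatim, and a single invocation per item suffices. Making this non-interference precise—so that the owner's vanishing envy edges provably disturb neither the potential nor the location of the improving cycle—is the crux of the whole time analysis.
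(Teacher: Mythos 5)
Your overall route is the same as the paper's: correctness via Lemma~\ref{lem:min-weight_perfect_matching} (together with the observations that $v_i(\mu^*(i)) \ge v_i(\mu(i))$ preserves EF1 and that acyclicity yields an unenvied bundle for Step~2(b)), and the running time via padding $G(\cX,\mu)$ to a complete bipartite graph and invoking Algorithm~\ref{alg:PDU} once per item. Your correctness half is accurate. On the running-time half you actually go beyond the paper: the paper flatly asserts that ``only the weights of edges around bundle $X_i$ change,'' which, as you observe, is not literally true --- the owner $j^* = \mu^{-1}(X_i)$'s envy threshold rises when $k$ is added, so envy edges $(j^*, X_{j''})$ may turn into padded edges, and these are not incident to $X_i$. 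Your resolution (weight increases on non-matching edges keep all reduced weights nonnegative, so the carried potential stays valid, and cannot create negative cycles, so every negative cycle still meets $X_i$, whence the proof of Theorem~\ref{thm:PDU} survives) is sound; equivalently, one can split the update into a harmless monotone step followed by a genuine single-vertex step to which Algorithm~\ref{alg:PDU} applies verbatim as a black box.

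There is, however, one further change of exactly the same kind that your accounting misses: Step~2(a) itself replaces $\mu$ by $\mu^*$, and the edge set of the assign-envy graph depends on $\mu$. Every agent $i$ with $v_i(\mu^*(i)) > v_i(\mu(i))$ stops envying each bundle $X_j$ with $v_i(X_j) \le v_i(\mu^*(i))$, so envy edges anywhere in the graph (not only at $j^*$) become padded between two consecutive executions of Step~2(a); moreover, the potential returned by Algorithm~\ref{alg:PDU} certifies optimality for the weight function reflecting the \emph{old} $\mu$, not for $G(\cX,\mu^*)$. These stale edges cannot simply be left in place: a minimum-weight perfect matching in a graph containing non-envy edges may assign some agent a strictly worse bundle (maximizing welfare can sacrifice one agent to enrich another), which would break both the EF1 invariant and the applicability of Lemma~\ref{lem:min-weight_perfect_matching}, and nothing in your argument rules this out. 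Fortunately, all of these deletions are again weight increases on edges outside the current matching $\mu^*$, so the very argument you give for the owner's edges disposes of them as well; one should only add the remark that increased edges are never matching edges, since an increase on a matching edge would \emph{decrease} the weight of its reversed copy in the auxiliary graph and could invalidate the potential. With that extra bookkeeping your proof is complete, and it is in fact more rigorous on this point than the paper's own justification.
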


\begin{theorem}\label{thm:Lipton2_improved}
Suppose that the valuations of all agents are additive, and $v_{i,1} \ge v_{i,2} \ge \dots \ge v_{i,m}$ for every agent $i \in N$.
Then, if the items are chosen in ascending order of $k = 1, 2, \dots, m$ in Step~2, Algorithm~\ref{alg:ECP-AP} correctly finds an EFX allocation in $\mathrm{O}(mn^2)$ time.
\end{theorem}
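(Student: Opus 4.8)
The plan is to split the statement into its two claims and handle them separately. The running-time bound $\mathrm{O}(mn^2)$ requires no new argument: processing the items in the fixed order $k = 1, 2, \dots, m$ is just a special case of Algorithm~\ref{alg:ECP-AP}, so the time analysis underlying Theorem~\ref{thm:Lipton_improved} applies verbatim. All the real work therefore lies in upgrading the output guarantee from EF1 to EFX under the extra hypotheses (additive valuations with the common descending order $v_{i,1} \ge v_{i,2} \ge \dots \ge v_{i,m}$, and items inserted in that order). I would prove this by induction on the iterations of Step~2, maintaining the invariant that $(\cX, \mu)$ is EFX; the base case is the empty allocation, which is vacuously EFX. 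Each iteration is the composition of the reassignment in Step~2(a) and the insertion in Step~2(b), and I would show separately that each sub-step preserves EFX.

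For Step~2(a) I would first record a convenient reformulation: because $\mu$ is a bijection onto the fixed bundle collection $\cX$, the allocation $(\cX, \mu)$ is EFX precisely when, for every agent $i \in N$ and \emph{every} bundle $B \in \cX$, we have $v_i(\mu(i)) \ge v_i(B \setminus \{k\})$ for all $k \in B$. Now let $\mu^*$ be the minimum-weight perfect matching computed in Step~2(a). As already observed after Lemma~\ref{lem:min-weight_perfect_matching}, the edge set of the assign-envy graph forces $v_i(\mu^*(i)) \ge v_i(\mu(i))$ for every $i \in N$. Instantiating the reformulation with $B \coloneqq \mu^*(j)$ and chaining this monotonicity gives $v_i(\mu^*(i)) \ge v_i(\mu(i)) \ge v_i(\mu^*(j) \setminus \{k\})$ for every pair $i, j$ and every $k \in \mu^*(j)$, which is exactly EFX for $(\cX, \mu^*)$. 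So reassignment can only help, independently of any ordering of the items.

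For Step~2(b) I would mirror the argument behind Theorem~\ref{thm:Lipton2}. Let $X_i$ be the bundle picked and $a \coloneqq \mu^{-1}(X_i)$ its current holder; since $X_i$ has only its matching edge in $G(\cX, \mu)$, Definition~\ref{def:assign-envy_graph} says no agent envies it, i.e.\ $v_j(\mu(j)) \ge v_j(X_i)$ for every $j \ne a$. After the update $X_i \leftarrow X_i \cup \{k\}$ only the constraints comparing some agent's bundle to the grown bundle $X_i \cup \{k\}$ can be newly at risk, so it suffices to verify $v_j(\mu(j)) \ge v_j((X_i \cup \{k\}) \setminus \{k'\})$ for each $j \ne a$ and each removable $k'$. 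Removing $k$ itself leaves $v_j(X_i) \le v_j(\mu(j))$; removing any $k' \in X_i$, additivity gives $v_j((X_i \cup \{k\}) \setminus \{k'\}) = v_j(X_i) + v_{j,k} - v_{j,k'}$, and since every item already present was distributed before item $k$ we have $k' < k$, whence $v_{j,k'} \ge v_{j,k}$ by the common descending order, so this quantity is at most $v_j(X_i) \le v_j(\mu(j))$. Agent $a$'s own constraints only relax because its bundle grew, which closes the induction.

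The two preservation checks are essentially routine; the one genuinely new point relative to \cite{BK2020} is Step~2(a), which has no counterpart in the original procedure and which the footnote flags as possibly yielding a different EF1 allocation. The hard part is thus to be sure that replacing the assignment by a minimum-weight perfect matching cannot destroy EFX, and the resolution is the clean observation that EFX depends only on each agent's own bundle value measured against the fixed collection $\cX$, combined with the monotonicity $v_i(\mu^*(i)) \ge v_i(\mu(i))$. I would also remark that Step~2(b) is well defined: by Lemma~\ref{lem:min-weight_perfect_matching} the graph $G(\cX, \mu^*)_{\mu^*}$ is acyclic, so some bundle receives no envy edge and hence has a unique incident edge, exactly as used in the proof of Theorem~\ref{thm:Lipton_improved}.
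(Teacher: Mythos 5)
Your proof is correct and follows exactly the route the paper intends: the paper gives no standalone proof of this theorem, deriving it instead from the $\mathrm{O}(mn^2)$ analysis preceding Theorem~\ref{thm:Lipton_improved} together with the claim that correctness is ``analogous'' to Theorems~\ref{thm:Lipton} and~\ref{thm:Lipton2}, with only the EF1 version of your Step~2(a) observation stated explicitly (in the note after Lemma~\ref{lem:min-weight_perfect_matching}). Your write-up supplies precisely the details the paper leaves implicit---the extension of the value-monotonicity argument $v_i(\mu^*(i)) \ge v_i(\mu(i))$ from EF1 to EFX for the reassignment step, using the reformulation of EFX against the fixed collection $\cX$, and the Barman--Krishnamurthy ordering argument for the insertion step---and both checks, as well as the well-definedness of Step~2(b) via acyclicity, are sound.
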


\begin{remark}
The computational time is actually bounded by using the number of edges in $G(\cX, \mu)$, which may be $\Omega(n^2)$ in the worst case.
In practice, however, the algorithm runs much faster than this theoretical bound, which is also true for the original envy-cycle procedure, because an acyclic graph tends to be sparse.
In addition, in this special situation, without keeping a potential, we can update the current assignment $\mu$ by solving the shortest path problem in directed acyclic graphs as follows; this slightly improves the running time when $G(\cX, \mu)$ is almost always sparse, in particular, with $\mathrm{o}(n \log n)$ edges.

Let $(G, w)$ and $(G', w')$ be the assign-envy graphs for the allocations $(\cX, \mu)$ just before and after Step~2(b), respectively.
As $\mu$ is a minimum-weight perfect matching in $(G, w)$, the auxiliary weighted graph $(G_\mu, w_\mu)$ has no cycles by Lemma~\ref{lem:negative_cycle}.
In addition, since $(G', w')$ coincides with $(G, w)$ except around $X_i\in \cX$ picked in Step~2(b), every cycle in $G'_\mu$ intersects $X_i$.
This means that the auxiliary weighted graph $(G_{\mu'}, w_{\mu'})$ to which we apply Dijkstra's algorithm in Step~2 of Algorithm~\ref{alg:PDU} is always acyclic, where $\mu' = \mu \setminus \{(\mu^{-1}(X_i), X_i)\}$.
Thus, Dijkstra's algorithm with a potential is not necessary, and the elementary, linear-time dynamic programming is sufficient for computing a shortest path from $\mu^{-1}(X_i)$ to $X_i$.
\end{remark}

\section*{Acknowledgments}
The proposed algorithms were developed via the preparation of Problem K of the International Collegiate Programming Contest (ICPC) 2021 Asia Yokohama Regional Contest \cite{ICPC2021}.
We are grateful to all of the contest organizers and participants.

This work was supported by JSPS KAKENHI Grant Numbers 20K19743 and 20H00605, and JST PRESTO Grant Number JPMJPR212B.

\end{document}